\newtheorem{theorem}{Theorem}[section]
\newtheorem{proposition}[theorem]{Proposition}
\theoremstyle{remark} 
\begin{document}

\title[Berezin-type quantization on even-dimensional  compact manifolds]{ Berezin-type quantization on even-dimensional  compact manifolds}

\author{ Rukmini Dey* and Kohinoor Ghosh**}
\address{Rukmini Dey,  email:rukmini@icts.res.in}

\address{Kohinoor Ghosh, email: kohinoorghosh@gmail.com}

\maketitle

\begin{center}
\text{I.C.T.S.-TI.F.R., Bangalore, India.*}

\text{I.C.T.S.-TI.F.R., Bangalore, India.**}
\end{center}

\vskip .5in

Key words: Berezin quantization,  coherent states,  reproducing kernel,  Geometric Quantization, Deformation Quantization,  Berezin-Toeplitz quantization
\vskip .5mm

MSC:  81-XX,  53-XX,  43-XX, 46-XX, 47-XX.

\begin{abstract}
In this article  we show that a Berezin-type quantization can be achieved on a compact even dimensional manifold $M^{2d}$ by removing a skeleton $M_0$ of lower dimension such that what remains is diffeomorphic to $R^{2d}$ (cell decomposition) which we identify with $C^d$ and embed in $ CP^d$.   A local  Poisson structure and Berezin-type quantization  are induced from $ CP^d$.  Thus we have a Hilbert space with a reproducing kernel.  The symbols of bounded linear operators on the Hilbert space  have a star product which satisfies  the correspondence principle outside a set of measure zero. This construction depends on the diffeomorphism.  One needs to keep track of the global holonomy and hence the  cell decomposition of the manifold.  As an example, we illustrate this type of quanitzation of the torus. We exhibit Berezin-Toeplitz quantization of a complex manifold in the same spirit as above. 
\end{abstract}

\section{Introduction}
Berezin quantization \cite{Be} is a method of defining a star product on the symbol  of operators acting on a Hilbert space (with a reproducing kernel) on a K$\ddot{\rm{a}}$hler manifold under certain conditions such that the star product  satisfies the correspondence principle. The literature on subsequent work after \cite{Be} on Berezin quantization is vast.  We mention that in \cite{E} the conditions have been relaxed considerably. 
Another direction this field has expanded is Berezin-Toeplitz quantization, see for instance \cite{BMS},  \cite{Ko}.

Some quantum systems donot come from quantizing classical systems (which are expected to have a symplectic structure) but there is a semi-classical limit of the quantum system.  For instance there is a semi-classical limit of the quantum system of  spin, see for instance,  \cite{Rad} ($S \rightarrow \infty $ in Radcliffe's notation).   We wish to include systems which donot have symplectic structure (or group action) and study if they are a semi-classical limit of some quantum system as $\hbar$ goes to zero.  This is the motivation of considering manifolds which have no symplectic or Poisson structure.  We induce local Poisson structure  on the manifold  by embedding parts  of it (i.e.  removing sets of measure zero)  into $CP^n$ or $C^n$ (depending on whether we expect a finite dimensional or an infinite dimensional Hilbert space) and induce the Berezin quantization from one of these two spaces. 

The other motivation of the work is that   sometimes the Hilbert space of the problem turns out to be different from what the actual manifold of parameter space should prescribe.  The Hilbert space could be just obtained from geometric quantization of $C^n$, or $ CP^n$,  whereas the parameter space  is not $C^n$ or $CP^n$. Roughly speaking in these two cases (namely $C^n$ or $CP^n$),  the Hilbert space consists of polynomials. For some situations  this could be a at least a  good approximation,  for example the Quantum Hall Effect (where polynomials suffice for lowest Landau levels \cite{To}). The global holonomy needs to be calculated,  which we explain in the example of the torus. 

In this article we show that a Berezin quantization can be achieved on a compact even dimensional manifold $M^{2d}$ by removing a skeleton of lower dimension such that what remains is diffeomorphic to $R^{2d}$ which we identify with $C^d$ and embed in $CP^d$.  We get an induced Berezin quantization from $CP^d$.  In other words,  we obtain a Hilbert space with a reproducing kernel and a star product on the symbol of bounded linear operators on the Hilbert space which satisfy the correspondence principle. 
The Berezin quantization depends on the diffeomorphism  of $M \setminus M_0$  to $R^{2d}$ but if we choose a different diffeomorphism  of $M \setminus M_0$  to $R^{2d}$ then we obtain a quantization with another reproducing kernel with star product on symbols which satisfy the correpondence principle.  These two quantizations need not be equivalent in the sense that there maynot be  a natural map between the Hilbert spaces which preserve the reproducing  kernel.

The set of meausre zero which we remove is the lower dimensional skeleton in cell decomposition so that what remains is a top dimensional cell which we identify with $U_0 \subset CP^n$, one of the homogeneous charts. 
We pull back the polynomials on $U_0$ to $X$ for the quantization. However we have to keep track of  the cell decomposition because of global holonomy.  The loops may pass through the sets of measure zero in $M_0$ which we have removed.  But this can be handled if we remember the lower dimensional skeleton we had removed.  Thus, even though we remove a set of measure zero,  it plays an important role in detemining the global holonomy.  We illustrate with the torus.

In  this context we recall that in \cite{DeGh} we had considered totally real submanifolds of $CP^n$ and defined pull back operators and their $CP^n$-symbols and showed that they satisfied the correspondence principle.

In this article we also exhibit Berezin-Toeplitz quantization on a compact complex manifold.

This work is part of Kohinoor Ghosh's thesis \cite{DeGh2}.

It  has many interesting applications in harmonic analysis and non-commutative geometry.  This is work in progress.

Role of $CP^n$ or $C^n$ can  taken by  other appropriate manifolds too.

\section{Review of Berezin quantization on $CP^n $}

This section is a review based on ideas from \cite{Be}.
In Berezin \cite{Be},  the quantization on $CP^n$ is achieved thinking of it as a homogeneous space.  In this section we give an explicit  path  to the quantization using a local description. 

Let $\Phi_{FS}$ be a local K$\ddot{\rm{a}}$hler potential for the Fubini-Study  K$\ddot{\rm{a}}$hler form $\Omega_{FS}$ on $CP^n$. Let us recall how this looks in local coordinates.

Let  $U_0 \subset CP^n$ given by $U_0 = \{\mu_0 \neq 0\}$ where $[\mu_0,...., \mu_n]$ are homogeneous  coordinates on $CP^n$.   

Let $\Phi_{FS}(\mu, \bar{\mu}) =  \ln \left( 1 + \sum_{i=1}^n | \mu|^2 \right)$ be the K$\ddot{\rm{a}}$hler potential and the K$\ddot{\rm{a}}$hler metric $G$ is given by  
$g_{ij}^{FS} = \frac{\partial^2 \Phi_{FS}}{\partial \mu_i  \partial \bar{\mu}_j}.$
 
The Fubini-Study form is given by $\Omega_{FS} = \sum_{i, j=1}^n \Omega^{FS}_{ij} d \mu_i \wedge d \bar{\mu}_j, $ where  the K$\ddot{\rm{a}}$hler metric $G$ and the K$\ddot{\rm{a}}$hler form $\Omega_{FS}$ are related by
$\Omega_{FS} (X, Y) = G(I X, Y)$. 

The coefficients of the inverse matrix $\Omega_{FS}^{ij}$ appears in the definition of the Poisson bracket of two functions $t$ and $s$:

$\{t,s\}_{FS} = \sum_{i,j=1}^n \Omega_{FS}^{ij} \left(\frac{\partial t}{\partial \bar{\mu}_j}  \frac{\partial s}{\partial \mu_i}  - \frac{\partial s}{\partial \bar{\mu}_i}  \frac{\partial t}{\partial \mu_j} \right).$

Let $T = \{ (\mu, \nu) \in C^n \times C^n| \mu \cdot \bar{\nu} =-1 \} $ and $S = (C^n \times C^n )\setminus T $.  Note that the diagonal $\Delta \subset S$.   
 For $(\mu, \nu) \in S $, we can define (taking a branch of  the logarithm)
 $\Phi_{FS} (\mu, \bar{\nu})  = \ln \left(1 + \mu \cdot  \bar{\nu}  \right).$

Let $H^{\otimes m}$ be the $m$-th tensor product of the hyperplane bundle $H$ on $CP^n$.  Then  recall that $m \Omega_{FS}$ is its curvature form and $m \Phi_{FS}$ is a local K$\ddot{\rm{a}}$hler potential.
Let $\Gamma_{hol}$ be holomorphic sections on it.  Let  $\{\psi_i \}_{i=1}^N$ be an orthonormal basis for it.  
On $U_0$ the sections of $H^{\otimes m}$ are functions since the bundles are trivial when restricted to $U_0$.

Let $\hbar = \frac{1}{m}$ be a  parameter. 

Thus $\{\psi_i\}$ implicitly depend on $\hbar$. We define

$dV(\mu) =  \left|\Omega_{FS}^n(\mu)|_{U_0}\right| =  {\mathcal G}(\mu)  \Pi_{i=1}^{n} |d \mu_i \wedge d \bar{\mu}_i| = {\mathcal G}(\mu)  |d \mu \wedge d \bar{\mu}| = \frac{|d \mu \wedge d \bar{\mu}|}{(1 + |\mu|^2)^{n+1}} $ to be a volume form on $C^n$, where ${\mathcal G} = \det[g^{ij}|_{U_0}]$.  
 
Then $ V=\int_{C^n} dV = \int_{C^n} \frac{|d \mu \wedge d \bar{\mu}|}{(1 + |\mu|^2)^{n+1} } < \infty$.

Let $(\mu_1, \mu_2,...\mu_n)$ be coordinates on $U_0 \equiv C^n$ such that  $[1, \mu_1, \mu_2, ...,\mu_n]  \in U_0$.

Let  $(c(m))^{-1}  = \int_{U_0} \frac{1} { (1+  |\nu|^2)^m} dV(\nu)= \int_{U_0} e^{-m \Phi_{FS} (\nu,\bar{\nu})} dV(\nu)$ where  recall $e^{m \Phi_{FS} (\nu,\bar{\nu})} = (1 + |\nu|^2)^m$. Also,   
$D_{(q_1, q_2,...q_n;q)} = c(m) \int_{U_0} \frac{|\nu_1|^{2q_1}...|\nu_n|^{2q_n}} {  (1+  |\nu|^2)^m} dV(\nu) ,$ where $q_i's$ are all possible positive  integers such that   $q_1+ ...+q_n = q;q=0,...,m.$ 

Let $\Psi_{(q_1,q_2,...,q_n;q)}(\mu) = \frac{1}{\sqrt{D_{(q_1,...,q_n;q)}}} \mu_1^{q_1}...\mu_n^{q_n}$ where $q_1+ ...+q_n = q;q=0,...,m.$ 

For shorthand we will use $I$ for the multi-index $I_q = (q_1,...,q_n;q)$ which runs over the set 
  $q_1 + ...+ q_n = q; q=0,...,m$.

Then $D_I =   c(m)  \int_{U_0} \frac{|\nu|^{2I}}{ (1+  |\nu|^{2})^m} dV(\nu).$
  
  Let an innerproduct   on the space of functions on $U_0$ be defined as 
  
  $\left<f,g \right> = c(m) \int_{U_0} \frac{ \overline{f(\nu)}g(\nu)}{ (1 + |\nu|^2)^m}  dV(\nu) =  c(m) \int_{U_0} \overline{f(\nu)}g(\nu) e^{-m \Phi_{FS}(\nu,\bar{\nu})}   dV(\nu).$ 
  
It is easy to check that 
 $\{ \Psi_{(q_1,...,q_n; q)} \}$ are   orthonormal in $C^n$ with respect to the inner product defined as above  and are restriction of  a basis for sections of $H^{\otimes m}$  to $U_0$. This forms a Hilbert space.

Let $N$ is the dimension of the Hilbert space,  i.e $N = \sum_J(1) $ where $J$ runs over the indices 
$J = (p_1,...,p_n, p), p_1+...+p_n = p, p=0,...,m$ and  $V = \int_{U_0} |\Omega^n| $.

{\bf Definition:}
The Rawnsley-type coherent states \cite{Ra}, \cite{Sp}  are given on $U_0$  by $\psi_{\mu}$ reading as follows:

$\psi_{\mu} (\nu) := \sum_{{q_1+q_2+...+q_n=q;q=0,1,...,m}} \overline{\Psi_{(q_1,q_2,...,q_n;q)}(\mu)}\Psi_{(q_1,q_2,...,q_n;q)}(\nu).$

In short hand notation 
$\psi_{\mu}  :=   \sum_{{I}} \overline{\Psi_{I}(\mu)}\Psi_{I}.$
\begin{proposition}\label{kernel}
Reproducing kernel property.
If  $\Psi$ is any other section,  then $\left< \psi_{\mu}, \Psi \right> =   \Psi(\mu)$.  In particular, $\left< \psi_{\mu}, \psi_{\nu} \right> =   \psi_{\nu}(\mu)$.
\end{proposition}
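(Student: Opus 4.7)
The plan is to verify the reproducing property by a direct bilinear expansion, exploiting the orthonormality of the basis $\{\Psi_I\}$. Because the Hilbert space of holomorphic sections of $H^{\otimes m}$ restricted to $U_0$ is finite-dimensional (dimension $N$), there are no convergence issues and $\psi_\mu$ is an honest element of this space for each fixed $\mu$.

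First I would expand an arbitrary section $\Psi$ in the orthonormal basis: write $\Psi = \sum_{J} c_{J}\, \Psi_{J}$ with $c_{J} = \langle \Psi_{J},\Psi\rangle$. Then I would substitute the definition
\[
\psi_{\mu} \;=\; \sum_{I} \overline{\Psi_{I}(\mu)}\,\Psi_{I}
\]
into $\langle\psi_{\mu},\Psi\rangle$ and use the fact that the inner product is conjugate-linear in the first slot (since it is defined by $\langle f,g\rangle = c(m)\int_{U_0}\overline{f(\nu)}\,g(\nu)\,e^{-m\Phi_{FS}}\,dV(\nu)$). The scalar $\overline{\Psi_{I}(\mu)}$ in the first argument therefore comes out as $\Psi_{I}(\mu)$, producing
\[
\langle \psi_{\mu},\Psi\rangle \;=\; \sum_{I,J} \Psi_{I}(\mu)\, c_{J}\,\langle \Psi_{I},\Psi_{J}\rangle.
\]

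Next I would apply orthonormality, $\langle \Psi_{I},\Psi_{J}\rangle = \delta_{IJ}$, which collapses the double sum to $\sum_{I} \Psi_{I}(\mu)\, c_{I}$. Recognizing this as exactly the evaluation of the expansion $\Psi = \sum_{I} c_{I}\Psi_{I}$ at the point $\mu$ yields $\langle \psi_{\mu},\Psi\rangle = \Psi(\mu)$. The second assertion then follows by specializing $\Psi$ to $\psi_{\nu}$, giving $\langle \psi_{\mu},\psi_{\nu}\rangle = \psi_{\nu}(\mu)$.

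There is no real obstacle to overcome here: the argument is a one-line consequence of orthonormal expansion, provided one is careful about the placement of the complex conjugate. The only subtlety worth flagging is the convention of conjugate-linearity; a sign/conjugation error there is what would make the reproducing identity fail, so I would make this explicit when writing out the computation.
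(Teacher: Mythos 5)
Your proposal is correct and follows essentially the same route as the paper: both arguments reduce the claim to the orthonormality relation $\langle\Psi_I,\Psi_J\rangle=\delta_{IJ}$, the paper by invoking linearity to check only basis elements, you by writing out the full orthonormal expansion of $\Psi$ — an immaterial difference. Your remark about conjugate-linearity in the first slot is the one point where care is needed, and you handle it correctly.
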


\begin{proof}
By linearity,  it is enough to  check this for $\Psi = \Psi_{I_0}$ a basis element.  

$  \left< \psi_{\mu}, \Psi_{I_0} \right> = \left<  \sum_{{I}} \overline{\Psi_{I}(\mu)}\Psi_{I}, \Psi_{I_0} \right> =  \sum_{{I}} \Psi_{I}(\mu) \left< \Psi_{I}, \Psi_{I_0} \right>.$
Now we observe that $ \left< \Psi_{I}, \Psi_{I_0} \right> = \delta_{II_0}$.
Thus $\left< \psi_{\mu}, \Psi_{I_0} \right> =    \Psi_{I_0}(\mu)$.
\end{proof}

\begin{proposition}\label{resolution}
Resolution of identity property:
$$c(m) \int_{U_0} \left< \Psi_1 , \psi_{\mu} \right> \left< \psi_{\mu} , \Psi_2 \right>  e^{-m \Phi_{FS}(\mu,\bar{\mu})} dV(\mu) = \left< \Psi_1 ,  \Psi_2 \right>.$$ In particular, 

$$c(m) \int_{U_0} \left< \psi_{\nu},   \psi_{\mu} \right>\left< \psi_{\mu},  \psi_{\nu} \right>  e^{-m \Phi_{FS}(\mu,\bar{\mu})} dV (\mu)= \left< \psi_{\nu} ,  \psi_{\nu}  \right>. $$
\end{proposition}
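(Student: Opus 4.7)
The plan is to reduce the identity directly to the reproducing kernel property established in Proposition \ref{kernel}. The essential observation is that the two inner products appearing inside the integral are nothing but the values $\overline{\Psi_1(\mu)}$ and $\Psi_2(\mu)$, after which the outer integral is precisely the definition of $\langle \Psi_1, \Psi_2\rangle$.

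More concretely, I would first apply Proposition \ref{kernel} to rewrite $\langle \psi_{\mu}, \Psi_2\rangle = \Psi_2(\mu)$. Since our inner product is conjugate-linear in the first slot, I also get
\[
\langle \Psi_1, \psi_\mu\rangle \;=\; \overline{\langle \psi_\mu, \Psi_1\rangle} \;=\; \overline{\Psi_1(\mu)}.
\]
Substituting these two equalities into the integrand replaces the pair of inner products by the pointwise product $\overline{\Psi_1(\mu)}\,\Psi_2(\mu)$.

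Next I would compare the resulting expression
\[
c(m)\int_{U_0}\overline{\Psi_1(\mu)}\,\Psi_2(\mu)\,e^{-m\Phi_{FS}(\mu,\bar\mu)}\,dV(\mu)
\]
with the definition of the inner product given earlier in the section, namely $\langle f,g\rangle = c(m)\int_{U_0}\overline{f(\nu)}\,g(\nu)\,e^{-m\Phi_{FS}(\nu,\bar\nu)}\,dV(\nu)$. These two expressions agree verbatim, so the integral equals $\langle \Psi_1,\Psi_2\rangle$, which is the first claim. The special case with $\Psi_1 = \Psi_2 = \psi_\nu$ follows by direct substitution.

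Honestly there is no serious obstacle here, since Proposition \ref{kernel} has done all the work; the only thing to be mildly careful about is the conjugation convention in the first slot of the inner product, so that the appearance of $\overline{\Psi_1(\mu)}$ (rather than $\Psi_1(\mu)$) is correctly accounted for. One might also want to remark that the interchange of summation and integration which is implicit when one expands $\psi_\mu = \sum_I \overline{\Psi_I(\mu)}\Psi_I$ is justified because $N$ is finite (the Hilbert space of holomorphic sections of $H^{\otimes m}$ on $CP^n$ is finite dimensional), so no convergence issue arises.
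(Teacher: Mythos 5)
Your proof is correct and is essentially identical to the paper's: both reduce the integrand to $\overline{\Psi_1(\mu)}\,\Psi_2(\mu)$ via the reproducing kernel property of Proposition \ref{kernel} and then recognize the resulting integral as the definition of $\left<\Psi_1,\Psi_2\right>$. Your extra remarks on the conjugation convention and the finite-dimensionality of the Hilbert space are harmless elaborations of the same argument.
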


\begin{proof}

We know that

$c(m) \int_{U_0} \left< \Psi_1 , \psi_{\mu} \right> \left< \psi_{\mu} , \Psi_2 \right>  e^{-m \Phi_{FS}(\mu,\bar{\mu})} dV =c(m) \int_{U_0} \overline{\Psi_1(\mu)} \Psi_2 (\mu)    e^{-m \Phi_{FS}(\mu,\bar{\mu})} dV  $ since by reproducing kernel property, $\left< \psi_{\mu}, \Psi \right> =    \Psi(\mu)$.  The above integral is $\left< \Psi_1, \Psi_2 \right>$. 
\end{proof}

\medskip

{\bf Notation:} We denote by 
 ${\mathcal L}_{m} (\mu,  \bar{\mu}) = \left< \psi_{\mu}, \psi_{\mu} \right> = \psi_{\mu}(\mu),$ 
 ${\mathcal L}_{m} ( \mu,  \bar{\nu}) =\left< \psi_{\mu}, \psi_{\nu} \right> = \psi_{\nu}(\mu).$

Let $\hat{A}$ be a bounded linear operator acting on ${\mathcal H}$.  Then,  as in  \cite{Be},  one can define a symbol of the operator as
$$A(\nu, \bar{\mu}) = \frac{\left< \psi_{\nu} , \hat{A} \psi_{\mu}, \right>}{ \left<  \psi_{\nu},  \psi_{\mu} \right>}.$$

One can show that one can recover the operator from the symbol by the formula ~\cite{Be}:
$$(\hat{A} f) (\mu) = c(m) \int_{U_0} A(\mu,  \bar{\nu}) f(\nu)  {\mathcal L}_m(\mu,  \bar{\nu})  e^{-m\Phi(\nu, \bar{\nu})} d V(\nu).$$

Let $\hat{A}_1,  \hat{A}_2$ be two such operators and let $\hat{A_1} \circ \hat{A_2}$ be their composition.

Then the symbol of $\hat{A_1} \circ \hat{A_2}$ will be given by the star product defined as in \cite{Be}:
\begin{eqnarray*}\label{star}
& & (A_1 * A_2 ) (\mu, \bar{\mu}) \\
&=&  c(m) \int_{U_0} A_1(\mu,  \bar{\nu}) A_2 (\nu,  \bar{\mu}) \frac{{\mathcal L}_{m} (\mu, \bar{\nu}) {\mathcal L}_{m} (\nu, \bar{\mu})}{ {\mathcal L}_{m} (\mu, \bar{\mu}){\mathcal L}_{m} (\nu, \bar{\nu})}  {\mathcal L}_m(\nu, \bar{\nu}) e^{-m\tilde{\Phi}(\nu, \bar{\nu})} d V(\nu),
\end{eqnarray*}
where recall $\frac{1}{c(m)} = \int_{U_0}  e^{-m \Phi_{FS} (\nu, \bar{\nu})} dV(\nu). $

This is the symbol of $\hat{A_1} \circ \hat{A_2}$.

\medskip

One can show   the following (\cite{DeGh2})
\begin{proposition}\label{multi}
$\psi_{\mu}(\nu) =  ( 1 + \bar{\mu} \cdot \nu)^m$.
\end{proposition}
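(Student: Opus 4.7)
The plan is to expand both sides by direct computation and match coefficients. Substituting the explicit form of $\Psi_I$ into the definition of $\psi_\mu$, I get
\[
\psi_\mu(\nu) \;=\; \sum_{q=0}^{m} \sum_{q_1+\cdots+q_n = q} \frac{1}{D_{(q_1,\ldots,q_n;q)}} \, (\bar{\mu}_1 \nu_1)^{q_1} \cdots (\bar{\mu}_n \nu_n)^{q_n}.
\]
On the other hand, applying the multinomial theorem twice (first the binomial in $1$ and $\bar\mu \cdot \nu$, then the multinomial expansion of $(\bar\mu\cdot\nu)^q$) rewrites the target expression as
\[
(1 + \bar{\mu}\cdot\nu)^m \;=\; \sum_{q=0}^{m} \sum_{q_1+\cdots+q_n=q} \frac{m!}{(m-q)!\,q_1!\cdots q_n!} \, (\bar\mu_1\nu_1)^{q_1}\cdots(\bar\mu_n\nu_n)^{q_n}.
\]
Thus the statement reduces to the identity $D_{(q_1,\ldots,q_n;q)} = \frac{(m-q)!\,q_1!\cdots q_n!}{m!}$.

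To verify this identity I would evaluate the defining integral for $D_I$ explicitly. Writing $\nu_j = r_j e^{i\theta_j}$, the angular variables decouple and contribute a factor $(2\pi)^n$, while the substitution $s_j = r_j^2$ converts the radial part into a standard Dirichlet integral on the positive orthant:
\[
\int_{[0,\infty)^n} \frac{s_1^{q_1}\cdots s_n^{q_n}}{(1+s_1+\cdots+s_n)^{m+n+1}}\,ds_1\cdots ds_n \;=\; \frac{\Gamma(q_1+1)\cdots\Gamma(q_n+1)\,\Gamma(m-q+1)}{\Gamma(m+n+1)},
\]
which is the standard beta-type evaluation. Applying this with all $q_i=0$ determines $c(m) = (m+n)!/\bigl((2\pi)^n m!\bigr)$, and substituting back yields the required $D_I = (m-q)!\,q_1!\cdots q_n!/m!$.

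The only nontrivial step is the Dirichlet integral evaluation, but this is classical and can be proved by induction on $n$ (reducing each integration to a Beta function via $s_j \mapsto t/(1-t)$). Once the formula for $D_I$ is in hand, matching coefficients is immediate, and the two displayed sums agree term by term, proving $\psi_\mu(\nu) = (1+\bar\mu\cdot\nu)^m$. As a sanity check, the same computation shows $D_{(0,\ldots,0;0)} = 1$, which is consistent with $\Psi_{(0,\ldots,0;0)}$ being the normalized constant section.
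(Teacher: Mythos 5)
Your proof is correct: the multinomial expansion of $(1+\bar\mu\cdot\nu)^m$ together with the Dirichlet/Beta evaluation of the integrals defining $c(m)$ and $D_I$ gives exactly $D_{(q_1,\ldots,q_n;q)} = (m-q)!\,q_1!\cdots q_n!/m!$, which matches the coefficients term by term (and note that since $D_I$ is a ratio of two integrals against the same measure, any overall normalization convention for $|d\nu\wedge d\bar\nu|$ cancels, so that point is harmless). The paper itself defers this proposition to the thesis \cite{DeGh2} rather than proving it here, but your computation is the standard argument one would expect there, so there is nothing further to add.
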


\begin{equation}\label{A}
 {\mathcal L}_{m} (\mu, \bar{ \mu}) = \left< \psi_{\mu}, \psi_{\mu} \right> = \psi_{\mu}(\mu)  =  e^{m \Phi_{FS}(\mu, \bar{ \mu})}
\end{equation}
 and 
for $(\mu, \nu) \in S$,
\begin{equation}\label{AA}
{\mathcal L}_{m} ( \mu,  \bar{\nu}) =\left< \psi_{\mu}, \psi_{\nu} \right> = \psi_{\nu}(\mu)  =   ( 1 + \mu \cdot \bar{\nu})^m =  e^{m \Phi_{FS}(\mu,  \bar{\nu})}.
\end{equation}

\medskip

Let $(\mu , \nu) \in S$. 

 Then we can define  
 $  \phi_{FS}(\mu, \bar{\mu}| \nu, \bar{\nu}) =  \Phi_{FS} (\mu, \bar{\nu}) +  \Phi_{FS} ( \nu,  \bar{\mu}) - \Phi_{FS} ( \mu, \bar{\mu}) - \Phi_{FS} ( \nu,  \bar{\nu}). $

In fact,
$\phi_{FS}(\mu, \bar{\mu}| \nu, \bar{\nu}) = \ln \big( \frac{(1 + \nu \cdot \bar{\mu}) ( 1 + \mu \cdot \bar{\nu})}{ ( 1 + |\mu|^2) ( 1+ | \nu|^2)} \big).$

\medskip

It is easy to show (\cite{DeGh2})
\begin{proposition} We have $\phi_{FS}$ is non-positive  on $S$  and  has a zero and a non-degenerate critical point ( as a function of $\nu$) at $\nu = \mu$. 
\end{proposition}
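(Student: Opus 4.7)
The strategy is to verify each assertion in turn working directly from the explicit formula
\[
\phi_{FS}(\mu,\bar\mu|\nu,\bar\nu) = \ln\frac{(1+\nu\cdot\bar\mu)(1+\mu\cdot\bar\nu)}{(1+|\mu|^2)(1+|\nu|^2)}.
\]
First I would observe that $(1+\nu\cdot\bar\mu)$ and $(1+\mu\cdot\bar\nu)$ are complex conjugates of each other, so their product equals $|1+\mu\cdot\bar\nu|^2$, which is strictly positive on $S$ (this is precisely why $T$ was excluded). Hence the argument of $\ln$ is a positive real number and $\phi_{FS}$ is unambiguously real-valued, sidestepping any branch issue.

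For non-positivity, together with the location of the zero, I would apply the Cauchy-Schwarz inequality in $\CC^{n+1}$ to the vectors $u=(1,\mu_1,\dots,\mu_n)$ and $w=(1,\nu_1,\dots,\nu_n)$. Their Hermitian inner product is $\langle u,w\rangle = 1+\mu\cdot\bar\nu$, with $\|u\|^2=1+|\mu|^2$ and $\|w\|^2=1+|\nu|^2$, giving
\[
|1+\mu\cdot\bar\nu|^2 \le (1+|\mu|^2)(1+|\nu|^2),
\]
with equality iff $u$ and $w$ are proportional; comparing first coordinates forces the constant of proportionality to be $1$, so equality holds iff $\nu=\mu$. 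Taking logarithm yields $\phi_{FS}\le 0$ with a unique zero at $\nu=\mu$. Because $\phi_{FS}$ attains its maximum there, $\nu=\mu$ is automatically a critical point.

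Non-degeneracy of the critical point is then a matter of computing the complex Hessian at $\nu=\mu$. The key observation is that $\Phi_{FS}(\mu,\bar\nu)$ and $\Phi_{FS}(\nu,\bar\mu)$ are respectively antiholomorphic and holomorphic in $\nu$, so the mixed derivative $\partial_{\nu_i}\partial_{\bar\nu_j}\phi_{FS}$ receives a contribution only from $-\Phi_{FS}(\nu,\bar\nu)$, yielding $-g^{FS}_{ij}(\mu)$. A short computation shows that the pure holomorphic second derivative $\partial_{\nu_i}\partial_{\nu_j}\phi_{FS}|_{\nu=\mu}$ (and similarly the antiholomorphic one) vanishes via an exact cancellation between the contributions from $\Phi_{FS}(\nu,\bar\mu)$ and $-\Phi_{FS}(\nu,\bar\nu)$. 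Consequently the Taylor expansion in $\nu$ at $\nu=\mu$ takes the form
\[
\phi_{FS}(\mu,\bar\mu|\nu,\bar\nu) = -\sum_{i,j} g^{FS}_{ij}(\mu)\,(\nu_i-\mu_i)\,\overline{(\nu_j-\mu_j)} + O(|\nu-\mu|^3),
\]
and since the Fubini-Study metric is positive-definite Hermitian, the leading quadratic form is negative-definite, hence the critical point is non-degenerate.

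I do not anticipate a serious obstacle. The only subtleties are bookkeeping: choosing the branch of the logarithm (handled automatically by the modulus-squared rewriting) and tracking the index conventions so that the mixed second derivative comes out to $-g^{FS}_{ij}(\mu)$ with the correct sign.
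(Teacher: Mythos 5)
Your proposal is correct and follows essentially the same route as the paper: non-positivity via the Cauchy--Schwarz inequality applied to the vectors $(1,\mu)$ and $(1,\nu)$ in $\CC^{n+1}$, and the critical point and its non-degeneracy via direct differentiation of $\phi_{FS}$ at $\nu=\mu$. You simply supply the details the paper leaves as a ``straightforward calculation,'' and in fact your identification of the mixed Hessian as $-g^{FS}_{ij}(\mu)$ together with the vanishing of the pure second derivatives is a more complete justification of non-degeneracy than the paper's bare assertion that $\partial^2\phi_{FS}/\partial\nu_i\partial\bar\nu_j|_{\nu=\mu}\neq 0$.
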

\begin{proof}
We know
$\phi_{FS}(\mu, \bar{\mu}| \nu, \bar{\nu}) = \ln \big( \frac{(1 + \nu \cdot \bar{\mu}) ( 1 + \mu \cdot \bar{\nu})}{ ( 1 + |\mu|^2) ( 1+ | \nu|^2)} \big)$.
By Cauchy-Schwartz inequality,  we have $\phi_{FS}$ is non-positive  definite.
 Fixing $\mu$, a straightforward calculation shows that $\frac{\partial \phi_{FS}(\mu, \bar{\mu}| \nu, \bar{\nu})}{\partial \nu_{i}}|_{\nu = \mu} = 0 $ for each $i$ and   $\frac{\partial^2 \phi_{FS}(\mu, \bar{\mu}| \nu, \bar{\nu})}{\partial \nu_{i} \partial \bar{\nu}_{j}}|_{\nu = \mu} \neq 0$. 
\end{proof}

Then we have 
$ \frac{{\mathcal L}_{m} (\mu, \bar{\nu}) {\mathcal L}_{m} (\nu, \bar{ \mu})}{ {\mathcal L}_{m} (\mu, \bar{\mu}){\mathcal L}_{m} (\nu, \bar{\nu})} =   e^{m\phi_{FS}(\mu, \bar{\mu}| \nu, \bar{\nu})},$
where we have used equations  (\ref{A}, \ref{AA}).  
We also have, by the reproducing kernel property,  that 
$\int_{U_0} \frac{{\mathcal L}_{m} (\mu,  \bar{\nu}) {\mathcal L}_{m} (\nu,  \bar{\mu})}{ {\mathcal L}_{m} (\mu, \bar{\mu}){\mathcal L}_{m} (\nu, \bar{ \nu})} d V(\nu) = \frac{1}{c(m)} .$

\medskip

{\bf Theorem:} [Berezin]\label{corresp}

Let $\mu \in  C^n$.  

The star product satisfies the correspondence principle:

1. $ \lim_{m \rightarrow \infty} (A_1 \star A_2)(\mu, \bar{\mu}) = A_1(\mu, \bar{\mu}) A_2 (\mu, \bar{\mu}),$

2. $ \lim_{m \rightarrow \infty} m (A_1 \star A_2 - A_2 \star A_1)(\mu, \bar{\mu}) =  i \{ A_1, A_2\}_{FS} (\mu, \bar{\mu}).$

\medskip
See \cite{Be}, \cite{DeGh2} for proof.

\section{ \bf Berezin-type quantization on compact  even dimensional manifolds}

Let $M^{2d}$ be an even dimensional compact smooth manifold.  We  do not  consider any  symplectic structure or Poisson structure or  group action on it. To obtain a  Berezin-type  quantization on it,  first we embed the manifold  (after perhaps removing a subset of measure zero) in $ CP^d$  and  then induce a local Poisson structure   on the embedded submanifold and induce the Berezin quantization from $CP^d$.  The Hilbert space of quantization is expected to be of finite dimension (since $M$ is compact) and  for that we choose $CP^d$ and not $C^d$.

Let $M^{2d}$ be a compact topological  manifold. Then by \cite{DH},  there exists a skeleton  $M_0$ of dimension at most $2d-1$ such that  $X = M \setminus M_0$  is homeomorphic to $R^{2d}$.  
We assume  $M^{2d}$ is equipped with a differentiable structure such that $M\setminus M_0$ is diffeomorphic to $R^{2d}$ with standard smooth structure. 

Let $\tau$ be the diffeomorphism and  $Y = \tau(X) = R^{2d} \equiv C^d$.  By abuse of notation,  we name the coordinates on $Y$ as $(\tau_1, \tau_2, ...., \tau_d)$ where $\tau_j = x_j + i y_j$, $j=1,...,d$, where $(x_1, y_1,...,x_d, y_d) \in R^{2d}$.  Let  $Y$  be given by the coordinates   $( \tau_1,.... , \tau_d )$.  Let $U_0$ be the open subset of $CP^d$ given by $\{w_0 \neq 0 \}$ where 
$[w_0,..., w_d]$ is a local coordinate on $CP^d$.  Let $U_0 = \{[1,\tau_1,...,\tau_d]\} \equiv C^d $   where $\tau_i = \frac{w_i}{w_0}$,  $i=1,..,d$.

Let us give a metric on $X= M \setminus M_0$ by identifying it with its image $Y= \tau(X) \equiv U_0$.  The volume form is  $dV =  \frac{ | d \tau \wedge d \bar{\tau}| }{(1 + |\tau|^2)^{d+1}}$ and $V = \int_{Y} dV < \infty$.

{\bf Algebra of  operators on $M \setminus M_0$}:

On $M \setminus M_0$, we define the Hilbert space of quantization to be $ \tilde{{\mathcal H}}_{\tau} = \tau^*({\mathcal H_Y})$ (i.e.  pulled back by the diffeomorphism $\tau$), where the volume form on $M \setminus M_0$ is induced from $U_0 \subset CP^d$.   Let $\zeta \in X$. Let $\tau = \tau(\zeta)$,  $s \in {\mathcal H}_Y$.  Let $d S(\zeta) $ be the volume form of $M$ such that $M_0$ is of measure zero.

Let Let $h(\zeta) >0$ be such that $h(\zeta)  dS(\zeta) = d V_{Y}(\tau) = \frac{ | d \tau \wedge d \bar{\tau}| }{(1 + |\tau|^2)^{d+1}}$. In other words, 
$\int_{X} |\tau^*(s)|^2(\zeta) h(\zeta) dS(\zeta) = \int_Y |s|^2 \frac{ | d \tau \wedge d \bar{\tau}| }{(1 + |\tau|^2)^{d+1}}$.

Let $\tilde{s} \in \tilde{{\mathcal H}}_{\tau}$ such that $\tilde{s} = \tau^*(s)$.  Then we define bounded linear operators $\hat{\tilde{A}} $ on ${\mathcal H}_{\tau} $ to be 
$$\hat{\tilde{A}} (\tilde{s})(p) \equiv  \hat{A}(s)(z),$$
where $z = \tau(p) \in U_0$ and $\hat{A}$ is a bounded linear operator on ${\mathcal H_Y}$.  It can be shown  that given a  bounded linear operator $\hat{\tilde{A}} $ on ${\mathcal H}_{\tau} $, there is a unique bounded linear operator $\hat{A}$ on ${\mathcal H_Y}$ such that $\hat{\tilde{A}} (\tilde{s})(p) \equiv  \hat{A}(s)(z)$.

Then symbols and star product can be defined for  $\hat{\tilde{A}} $ via $\hat{A}$ and correspondence principle follows. Now we elaborate this. 

The symbol of    $\hat{\tilde{A}} $ is defined to be 
 $\tilde{A}(p, q)   \equiv A(z, \bar{w}) $ where $z = \tau(p), w = \tau(q)$.
 
 Suppose we have two operators 
$\hat{\tilde{A}}_1 $ and $\hat{\tilde{A}}_2$.

Then $\tilde{A}_1 * \tilde{A}_2   $ is defined on $(M \setminus M_0) \times (M \setminus M_0)$ to be 
$(\tilde{A}_1 * \tilde{A}_2)(p,p) \equiv (A_1 * A_2 )  (z, \bar{z})  $ in $CP^d$.

In general the algebra of operators will depend on the diffeomorphism.

Then we can see that the star product satisfy the correspondence principle. The proof is exactly same as  the previous section with $n=d$.

\begin{proposition}
Let  $\tau \in  C^d$.

$\tilde{A}_1 * \tilde{A}_2 = A_1 * A_2$ satisfy the correspondence principle.

1. $ \lim_{m \rightarrow \infty} (A_1 \star A_2)(\tau, \bar{\tau}) = A_1(\tau, \bar{\tau}) A_2 (\tau, \bar{\tau}),$

2. $ \lim_{m \rightarrow \infty} m (A_1 \star A_2 - A_2 \star A_1)(\tau, \bar{\tau}) =  i \{ A_1, A_2\}_{FS} (\tau, \bar{\tau}).$

\end{proposition}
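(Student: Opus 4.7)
The plan is to reduce the statement directly to the $CP^d$ version of the Berezin theorem already recorded in Section 2, and then carry out a stationary phase (Laplace method) analysis of the star product integral. By the very definition $\tilde{A}_i(p,q) = A_i(\tau(p), \overline{\tau(q)})$ and $(\tilde{A}_1 * \tilde{A}_2)(p,p) = (A_1 * A_2)(z, \bar{z})$ with $z = \tau(p) \in U_0$, the two limit statements for $\tilde{A}_1 * \tilde{A}_2$ at $(p,p)$ are literally the same as the corresponding limits for $A_1 * A_2$ at $(z,\bar{z}) \in U_0 \subset CP^d$. So after unpacking the pullback there is nothing to prove beyond the original Berezin correspondence principle on $CP^d$, applied with $n=d$.

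To establish that correspondence on $CP^d$, I would first substitute the identity
\[
\frac{{\mathcal L}_m(\mu,\bar\nu){\mathcal L}_m(\nu,\bar\mu)}{{\mathcal L}_m(\mu,\bar\mu){\mathcal L}_m(\nu,\bar\nu)} = e^{m\phi_{FS}(\mu,\bar\mu|\nu,\bar\nu)}
\]
into the star product integral, giving
\[
(A_1 * A_2)(\mu,\bar\mu) = c(m)\int_{U_0} A_1(\mu,\bar\nu)A_2(\nu,\bar\mu)\, e^{m\phi_{FS}(\mu,\bar\mu|\nu,\bar\nu)}\,dV(\nu).
\]
The key analytic input is the earlier proposition: $\phi_{FS} \leq 0$ on $S$ with a unique non-degenerate critical point in $\nu$ at $\nu=\mu$, where it vanishes. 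Thus $e^{m\phi_{FS}}$ is a Laplace-type concentrator on the diagonal. A standard steepest descent expansion around $\nu=\mu$, combined with the asymptotic $c(m) \sim m^d \cdot (\text{const})$ coming from evaluating $\int e^{-m\Phi_{FS}}\,dV$ by the same method, yields
\[
(A_1 * A_2)(\mu,\bar\mu) = A_1(\mu,\bar\mu)A_2(\mu,\bar\mu) + \tfrac{1}{m}R(\mu,\bar\mu) + O(m^{-2}),
\]
giving claim (1) immediately.

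For claim (2), I would Taylor expand $A_1(\mu,\bar\nu)A_2(\nu,\bar\mu)$ to second order about $\nu=\mu$ and carry out the Gaussian integral against the Hessian of $\phi_{FS}$. The crucial point is that
\[
\frac{\partial^2 \phi_{FS}}{\partial \nu_i\, \partial \bar\nu_j}\Big|_{\nu=\mu} = -\,g^{FS}_{ij}(\mu,\bar\mu),
\]
so the quadratic form producing the $1/m$ correction inverts to the Fubini–Study metric and thereby generates the Fubini–Study Poisson tensor $\Omega_{FS}^{ij}$. Antisymmetrizing $A_1*A_2 - A_2*A_1$ kills the symmetric $A_1 A_2$ term and all contributions even in the mixed $\partial/\partial \nu_i$, $\partial/\partial\bar\nu_j$ derivatives, leaving precisely
\[
\tfrac{1}{m}\bigl((A_1 * A_2 - A_2 * A_1)(\mu,\bar\mu)\bigr) \;\longrightarrow\; i\{A_1,A_2\}_{FS}(\mu,\bar\mu).
\]

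The main technical obstacle is making the stationary phase rigorous, specifically controlling the tail of the $\nu$-integral away from $\nu=\mu$ uniformly in $m$. Here the compactness argument uses that on $U_0 \subset CP^d$ the function $\phi_{FS}$ is bounded above by a negative constant outside any fixed neighborhood of the diagonal (together with the smoothness and boundedness of $A_1,A_2$ as symbols of bounded operators), so the off-diagonal contribution is $O(e^{-cm})$ and does not affect the asymptotics. Once this is in place, the two limits follow by direct computation, and the pullback identification from the first paragraph transfers the result back to $M \setminus M_0$.
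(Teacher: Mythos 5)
Your proposal takes essentially the same route as the paper: the entire content of the paper's proof is the observation that the pullback definitions reduce the statement verbatim to the $CP^d$ correspondence principle of Section 2 with $n=d$, which you carry out correctly in your first paragraph. The remainder of your write-up (the Laplace/stationary-phase analysis around the non-degenerate maximum of $\phi_{FS}$ on the diagonal) is a sound sketch of the underlying Berezin theorem that the paper simply cites rather than reproves, and it matches the machinery the paper sets up in Section 2 for exactly that purpose.
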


\begin{proof}
Set  $n=d$ in the previous section. The proof follows  essentially from  Lemma (2.1)  in ~\cite{Be} as elaborated in \cite{DeGh2}. 
\end{proof}

\subsection{ Equivalence of two Berezin quantizations:}

 On a smooth (complex) manifold $M^{2d} \setminus M_0 $, let there be a local Poisson structure and a  Berezin-type quantization defined as above  induced from $CP^d$.  Suppose there are two diffeomorphisms (biholomorphisms if $M \setminus M_0$ is complex)  which induce two such quantizations.  
   Then there are    two Hilbert spaces with reproducing kernels and star products on symbols of bounded linear operators  which satisfy the correspondence principle.  Suppose there exists  a  smooth (or biholomorphic)  bijective map  $\psi$ from $M \setminus M_0 $ to $M \setminus M_0$ which preserve  the  local Poisson structures.  If $\psi$  induces an isomorphism (i.e.  a bijective linear map that preserves innerproduct) between the two Hilbert spaces such that  the reproducing kernel maps to the corresponding reproducing kernel then we shall say the two Berezin quantizations are equivalent.

 \section{{\bf Our method of quantization for the torus}}

Let $L$ be a line bundle on $ CP^1$.  $CP^1 $ is homeomorphic to the sphere of radius $1$ and let $N, S$ be the north and the south poles and $E$ the equatorial circle.   Let $U_N = S^2 \setminus N$ and $U_S = S^2 \setminus S$ be two charts on the sphere such that $U_N $ is homeomorphic to the equatorial plane using the stereographic projection from $N$ and $U_S$ being the same using stereographic projection from $S$.   The transition function $t_{NS}$ of the line bundle $L$ when restricted to the equator,  winds  the equatorial circle $E$  to $r$ times $U(1)\equiv S^1$ , $r \in Z$.  This winding number characterises  smooth line bundles on the sphere.  For the transition function of  $H$, the hyperplane line bundle,  let the winding number be $r_0$.   Then for  $L= H^{\otimes m}$ the winding number is   $q=r_0 m$.  (As an aside,  the set of holomorphic sections of $H^{\otimes m}$ are in one to one  correspondence with polynomials of degree $\leq m$ in one complex variable--for more details,  see \cite{Nai}, p 500). 
  
  Let $i \theta_1$ be  the imaginary valued connection $1$-form for $H$ (curvature proportional to the Fubini-Study form $\omega_{FS}$).  Let $ i\theta = m i \theta_1$  the  connection $1$-form on $H^{\otimes m}$. Let $m=2s$ be an even integer.  
  Let $\psi$ be a section of  $H^{\otimes m}$ on sphere which satisfies $(d + m i   \theta_1) \psi =0$.   On integration on any closed loop $C_1$ on the sphere,  $\psi = \exp(-im \int_{C_1} \theta_1)  \psi_0$ where $\psi_0= \psi(t_0)$.   The phase factor is called holonomy and is well defined along this path  because the curvature of the line bundle  $\omega_{FS} = d \theta_1$ belongs to the  integral cohomolgy $H^2(S^2, Z)$,  in  \cite{W}, p 158.

  Let $\overline{U_u}$ be the upper hemisphere of the sphere with boundary $E$. $U_u$ is the interior of $\overline{U_u} $ which is diffeomorphic to a disc. 
  As before let  $\psi$ be a section of $H^{\otimes m}$ and $E$ be parametrized by $t$ such that $E_1$ and $E_2$ are parametrised by $0\leq t \leq \frac{1}{2}$ and $\frac{1}{2} \leq t \leq 1$ respectively such that $E = E_1+ E_2$.   Let $\bar{E}_2 = -E_2$,  i.e.  $E_2$ with the reverse direction.
  
  We note that $\exp(-i \int_E \theta) = \exp(-i \int_E 2s \theta_1) = \exp(-i \int_{U_u} 2 s \omega_{FS}) = \exp(-i s \frac{A_{S^2}}{2}) = 1$  since $\int_{U_u} \omega_{FS} = \frac{A_{S^2}}{4} =\pi$. Also
  $\exp(-i \int_{E_1} \theta)  \exp(i \int_{\bar{E}_2} \theta) = 1$ by the same reason. 
Thus $\exp(-i \int_{E_1} \theta)  = \exp(-i \int_{\bar{E}_2} \theta)$.

One sees that $\psi(\frac{1}{2}) = \exp({-i \int_{E_1} \theta}) \psi_0   =    exp({-i 2s \int_{E_1} \theta_1}) \psi_0 $ and $\psi(1)= \exp({-i \int_{E_2} \theta}) \psi(\frac{1}{2})   =     \exp({-i \int_{E_2} \theta})    \exp({-i \int_{E_1} \theta}) \psi_0 =      \exp({-i \int_{E} \theta})  \psi_0=  \exp({-i \int_{E}2  s \theta_1}) \psi_0 = \psi_0  $.

Let $E = E_1^1 + E_1^2 $ where $E_1^1 $ is half way of $E_1$ and $E_1^2$ is the other half of $E_1$ and $E_2 = E_2^1 + E_2^2$, where $E_2^1$ is half way of $E_2$ and $E_2^2$ is the other half of $E_2$. 
  
Let $A$ and $B$ be the two representatives of the homology of the torus, $M_0 = A \cup B$.  Let $X = T^{2} \setminus M_0 \equiv U_u$ (by a diffeomorphism).   $\overline{U_u} \setminus U_u$ is $E$ and $T^2 \setminus X = A \cup B$. 
Let us identify points on $E$ with $A \cup B $ such that one-fourths  of the equator $E_1^1$ is identified with $A$  and the other half $E_1^2$ is identified with $B$ (using a quotient map). Similarly, $E_2 = E_2^1 + E_2^2$, such that $E_2^1$ and $E_2^2$   are identifed with $-A$, $-B$. This is in keeping with the cell decomposition of the torus.  $E_1$ and $E_2$ are identified with loops $A+B$ and $-A-B$.

After identification,   $\exp({-i \int_{A+B} \theta}) =  \exp({-i \int_{E_1} \theta}) $ and $\exp({-i \int_{-A-B} \theta}) =  \exp({-i \int_{E_2} \theta})  $.  In other words,  $\exp({-i \int_{E_1 + E_2} \theta}) =  \exp({-i \int_{E}2  s \theta_1})=1$, as was shown before.   

Take a loop $C$ on the torus  such that $C = k_1 A + k_2 B$.      
This is identified with $\tilde{C} = k_1 E_1^1 + k_2 E_1^2$.

$C = k_1 A+ k_2 B$ be a closed loop of the torus,  as before, parametrised by $0\leq t \leq 1$ and $\psi$ be a section of $H^{\otimes m}$, $m=2s$, an even integer.  Then $\psi(1)  = \exp({-i \int_{k_1 A}  \theta - i \int_{k_2 B}  \theta}  ) \psi_0$ where $\psi_0= \psi(0)$.  Then the phase factor due to holonomy  is $\exp({-i  \int_{k_1 E^1_1} \theta -i   \int_{k_2 E_1^2} \theta} ) $
which is well defined (because we can translate the question to that on the sphere).

If $p \in X= T^2 \setminus (A \cup B)$,  $\beta$ is a loop on the torus $T^2$ which is contained entirely in $T^2 \setminus (A \cup B)$ one can easily show there is well defined global holonomy,  after the identification of $X$ with $U_u$.   If $\beta$ is a loop starting and ending at $p \in T^2$ that intersects $A$ or $B$,  by our identification of $A$ and $B$ with $E_1^1$ and $E_1^2$,  the holonomy on $\beta$ is also well defined (as we can translate the question to that on the sphere).

\section{ {\bf Toeplitz quantization on  compact complex manifolds } }

Let $M$ be a compact complex manifold of dimension $d$. Let $M_0$ be a set of measure zero such that $ X= M \setminus M_0 \equiv R^{2d} \equiv C^d$ (diffeomorphism).  Let $U_0=\{[1, z_1, z_2, ..., z_d]\}$ be one of the homogeneous  charts of $CP^d$. 
Thus we have  an embedding,   $\epsilon$,  which maps  $M \setminus M_0$ onto $U_0 \subset CP^d $.  Note that  $CP^d$ is endowed with Fubini-Study metric. 

Recall that the volume element on $CP^d$ restricted to $U_0$ is given by $dV_{CP^d} =dV(\mu)= \frac{|d \mu \wedge d \bar{\mu}|}{(1 + |\mu|^2)^{d+1}}$.

Let $H$ be the hyperplane line bundle on $CP^d$ and $H^{\otimes m}$ be the $m$-th tensor power of $H$ and $\mathcal{H}^m$ be the Hilbert space of square integrable holomorphic sections of $H^{\otimes m}$ restricted on $U_0$. Let $\mathcal{H}_{X}^m$ denotes $\epsilon^\ast (\mathcal{H}^m)$.

 $M \setminus M_0$ has an induced  volume form  as follows.  Let $\Sigma = \epsilon(M \setminus M_0)$ and   $h(\zeta)  dS(\zeta) = d V_{\Sigma}(\epsilon(\zeta))$, where $h >0$ is a smooth function.  Note that  all pullback sections  in
 $\mathcal{H}_{X}^m$  are  square integrable w.r.t.  the measure $h dS$ on $X = M \setminus M_0$.

Let $f, g$ be a smooth function on $CP^d$ restricted to $U_0$ and let $\tilde{f}, \tilde{g}$ be the smooth functions on  $M \setminus M_0$, which are pulled back by $\epsilon$, i.e., for $\mu\in M \setminus M_0$,  $\tilde{f}(\mu) \doteq f(\epsilon(\mu))$, similarly  $\tilde{g}(\mu) \doteq g(\epsilon(\mu))$. 

We claim $f$ is a unique function on $CP^d$ given $\tilde{f} = \epsilon^*(f)$,.  Suppose,  $\tilde{f} = \epsilon^*(f_1) = \epsilon^*(f_2)$. Then $f_1 - f_2=0$ on $ \Sigma = \epsilon(M \setminus M_0)$.  But   $\Sigma = U_0 \subset CP^d$ is an open set in $CP^n$ such that its complement is of measure zero.   Since $f_1 - f_2 $ is smooth,  it extends to all of $CP^d$  and is identically $0$.

Recall for  $CP^d$ (restricted to $U_0$),   $m$-th level Toeplitz operator of $f$, denoted by $T^m_f$, defined on  $\mathcal{H}^m$, defined as $T^m_f (s)= \Pi^m(f s)$, where $\Pi^m$ is the projection map from square integrable smooth sections onto $\mathcal{H}^m$ and $s\in \mathcal{H}^m$. 
Let $\tilde{s}= \epsilon ^\ast s$.     One can show that given $\tilde{s} $, $s$ is unique.  
This is because if $\tilde{s} = \epsilon ^\ast s_1 = \epsilon ^\ast s_2$. Then $s_1 - s_2 =0$ on $\Sigma$ and thus on $CP^d$.   But $s_1, s_2$ are global holomorphic sections of $H^m$ and can be extended to all of $CP^d$. Thus $s_1 - s_2  \equiv 0$.

For $ X = M \setminus M_0$, we denote $$|| \tilde{s} ||^2 = || \tilde{s}||^2_X = \int_{X} |\tilde{s}|^2  h(\zeta)  dS(\zeta) = \int_{\Sigma} | s_{|_ \Sigma}|^2 d V_{\Sigma}(\epsilon(\zeta))$$ where recall $\Sigma = \epsilon(X)$. 

But $ \int_{\Sigma} | s_{|_ \Sigma}|^2 d V_{\Sigma}(\epsilon(\zeta)) = \int_{CP^d} | s|^2 d V_{CP^d}$ since $ \Sigma = U_0 \subset  CP^d $ and $CP^d \setminus U_0$ is of measure zero.

Thus we have 
\begin{equation}\label{norms}
|| \tilde{s} ||^2 = ||s||^2
\end{equation}
 where the first norm is in $X= M \setminus M_0$ and second norm is in $CP^d$.

For a functions $\tilde{f} \in C^{\infty}(X)$, we define a set of  operators for $M \setminus M_0$, defined on $\mathcal{H}_{X}^m$, denoted by $\tilde{T}^m_{\tilde{f}}$.

{\bf Definition}
$\tilde{T}^m_{\tilde{f}} (\tilde{s})= \tilde{\Pi}^m(\tilde{f} \tilde{s})$where 
$\tilde{\Pi}^m  \epsilon^{\ast}\doteq \epsilon^\ast \Pi^m$.

Since $\tilde{f} \tilde{s} = \epsilon^*(f s)$ for a unique $fs \in {\mathcal H}^m$,  we have that $\tilde{T}^m_{\tilde{f}} (\tilde{s})$ is well defined. 

We know from Toeplitz quantization of $CP^d$ (see \cite{BMS}), that, 

\begin{equation}\label{toeplitzcpn}
\lim_{m \rightarrow \infty} || T^m_{f} || = ||f||_{\infty},\\
\lim_{m \rightarrow \infty} || m [ T^m_{f} , T^m_{g}]  - i T^m_{\{f, g\}}|| = 0.
\end{equation}

Let $\{\tilde{f}, \tilde{g}\} \doteq \epsilon^\ast \{f,g\}$.

\begin{proposition}
$\tilde{T}^m_{\tilde{g}} \epsilon^\ast= \epsilon^\ast T^m_g$ and $\tilde{T}^m_{\{\tilde{f},\tilde{g}\}} \epsilon^\ast= \epsilon^\ast T^m_{\{f,g\}}$
\end{proposition}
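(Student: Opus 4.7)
The plan is to prove this by unwinding the definitions, since both identities turn out to be direct consequences of how $\tilde{T}^m_{\tilde f}$ and $\tilde\Pi^m$ were set up. The key observation is that pullback along $\epsilon$ is a multiplicative operation on (smooth) functions and sections: if $\tilde f = \epsilon^\ast f$ and $\tilde s = \epsilon^\ast s$, then pointwise $\tilde f \cdot \tilde s = \epsilon^\ast (f s)$. This lets us push the multiplication operator through the pullback.

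For the first identity, I would take any $s \in \mathcal{H}^m$ and chase:
\begin{equation*}
\tilde T^m_{\tilde g}\bigl(\epsilon^\ast s\bigr) \;=\; \tilde\Pi^m\bigl(\tilde g \cdot \epsilon^\ast s\bigr) \;=\; \tilde\Pi^m\bigl(\epsilon^\ast(g s)\bigr) \;=\; \epsilon^\ast\bigl(\Pi^m(g s)\bigr) \;=\; \epsilon^\ast\bigl(T^m_g s\bigr),
\end{equation*}
using the definition of $\tilde T^m_{\tilde g}$, multiplicativity of pullback, the defining relation $\tilde\Pi^m \epsilon^\ast = \epsilon^\ast \Pi^m$, and finally the definition of $T^m_g$. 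Since this holds for every $s \in \mathcal{H}^m$, we conclude $\tilde T^m_{\tilde g} \epsilon^\ast = \epsilon^\ast T^m_g$ as maps $\mathcal{H}^m \to \mathcal{H}^m_X$. One small point worth flagging: the identity $\tilde f \tilde s = \epsilon^\ast(fs)$ pulls back a section that need not lie in $\mathcal{H}^m$ (since $f s$ is only smooth, not holomorphic), so I would use the extension of $\tilde\Pi^m$ to all square-integrable smooth sections pulled back via $\epsilon^\ast$, which is the natural domain indicated by the definition.

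For the second identity, I would simply apply the first identity with the function $g$ replaced by $\{f, g\} \in C^\infty(U_0)$: this gives $\tilde T^m_{\epsilon^\ast\{f,g\}}\, \epsilon^\ast = \epsilon^\ast T^m_{\{f,g\}}$, and then invoke the definition $\{\tilde f, \tilde g\} \doteq \epsilon^\ast\{f, g\}$ to rewrite the left-hand side as $\tilde T^m_{\{\tilde f,\tilde g\}} \epsilon^\ast$.

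I do not expect a genuine obstacle here; the content is formal, and the uniqueness results established earlier in the section (that $f$ is determined by $\tilde f$ and $s$ by $\tilde s$) are precisely what guarantee that every operator appearing above is well-defined. The only thing to keep honest is the domain on which each projector acts, so I would be explicit that $\tilde\Pi^m$ is applied to $\epsilon^\ast(fs)$, which is legitimate because $fs$ is a smooth square-integrable section on $CP^d$ and $\tilde\Pi^m \epsilon^\ast = \epsilon^\ast \Pi^m$ extends the intertwining to this larger domain by construction.
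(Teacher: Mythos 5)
Your proof is correct and follows essentially the same route as the paper: unwind the definition of $\tilde T^m_{\tilde g}$, use multiplicativity of $\epsilon^\ast$ to write $\tilde g\cdot\epsilon^\ast s=\epsilon^\ast(gs)$, apply the defining intertwining relation $\tilde\Pi^m\epsilon^\ast=\epsilon^\ast\Pi^m$, and obtain the second identity by substituting $\{f,g\}$ for $g$. Your added remarks on the domain of $\tilde\Pi^m$ and on the uniqueness results are a welcome clarification but do not change the argument.
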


\begin{proof}

To prove the first equality, 
\begin{eqnarray*}
 \tilde{T}^m_{\tilde{g}} \epsilon^\ast s (\mu) &=& \tilde{\Pi}^m(\tilde{g}\cdot \epsilon^\ast (s))(\mu)= \tilde{\Pi}^m(\epsilon^\ast g\cdot \epsilon^\ast(s))(\mu)= \tilde{\Pi}^m(\epsilon^\ast (g\cdot s) ) (\mu)\\
 &=& (\epsilon^{\ast} \Pi^m g \cdot s) (\mu) \\
 &=&(\epsilon^\ast T^m_g)(s)(\mu)
\end{eqnarray*}
The second equality follows from this.  
\end{proof}
	
\medskip

Recall that $||\epsilon^*(s)|| = ||\epsilon^*(s)||_X$ and $||s|| = || s||_{CP^d}$. 
Now $||\epsilon^*(s)|| = ||s||$ by (\ref{norms}). This implies $||\tilde{T}^m_{\tilde{f}} || = || T_{f}^m||$ for each $m$.
\begin{equation} 
\lim_{m \rightarrow \infty} || \tilde{T}^m_{\tilde{f}} || =  \lim_{m \rightarrow \infty} || T^m_{f} ||.
 \end{equation}
\begin{proposition}
$[\tilde{T}^m_{\tilde{g}}, \tilde{T}^m_{\tilde{g}}]= \epsilon^*[ T_f^m, T^m_g].$
\end{proposition}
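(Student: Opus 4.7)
The plan is to apply the previous proposition twice and then subtract. Recall that every element of $\mathcal{H}^m_X$ is of the form $\tilde{s} = \epsilon^\ast s$ for a unique $s \in \mathcal{H}^m$ (by the uniqueness argument given earlier: if $\epsilon^\ast s_1 = \epsilon^\ast s_2$ then $s_1 - s_2$ vanishes on the dense open set $\Sigma = U_0$, hence on all of $CP^d$ by holomorphicity). Therefore it suffices to verify the asserted identity when applied to an arbitrary $\tilde{s} = \epsilon^\ast s$, and the right-hand side is naturally interpreted as the operator sending $\epsilon^\ast s$ to $\epsilon^\ast\!\bigl([T^m_f, T^m_g] s\bigr)$.

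First I would compute $\tilde{T}^m_{\tilde{f}} \tilde{T}^m_{\tilde{g}} \epsilon^\ast s$ by applying the intertwining identity $\tilde{T}^m_{\tilde{g}}\,\epsilon^\ast = \epsilon^\ast T^m_g$ from the preceding proposition to the inner factor, obtaining $\tilde{T}^m_{\tilde{f}}\,\epsilon^\ast(T^m_g s)$. Now $T^m_g s$ is again an element of $\mathcal{H}^m$, so I can apply the same intertwining identity for $\tilde{f}$, giving $\epsilon^\ast T^m_f T^m_g s$. By an identical two-step manipulation with the roles of $f$ and $g$ swapped, $\tilde{T}^m_{\tilde{g}} \tilde{T}^m_{\tilde{f}} \epsilon^\ast s = \epsilon^\ast T^m_g T^m_f s$. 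Subtracting the two equalities yields
\begin{equation*}
[\tilde{T}^m_{\tilde{f}}, \tilde{T}^m_{\tilde{g}}]\,\epsilon^\ast s \;=\; \epsilon^\ast\!\bigl(T^m_f T^m_g s - T^m_g T^m_f s\bigr) \;=\; \epsilon^\ast [T^m_f, T^m_g]\, s,
\end{equation*}
which is the desired identity on $\mathcal{H}^m_X$.

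There is no substantive obstacle here; the proof is purely formal once one has the intertwining relation $\tilde{T}^m_{\tilde{h}}\,\epsilon^\ast = \epsilon^\ast T^m_h$ (applied for $h = f$ and $h = g$) together with the observation that $\mathcal{H}^m_X = \epsilon^\ast \mathcal{H}^m$ so that $\tilde{s} = \epsilon^\ast s$ ranges over all of $\mathcal{H}^m_X$ as $s$ ranges over $\mathcal{H}^m$. The only point to be careful about is to assume, as is standard in the Toeplitz setup, that $\tilde{T}^m_{\tilde{f}}$ and $\tilde{T}^m_{\tilde{g}}$ preserve $\mathcal{H}^m_X$, so that their composition and commutator are well-defined endomorphisms of this Hilbert space; this follows from the definition $\tilde{T}^m_{\tilde{h}} = \tilde{\Pi}^m \circ M_{\tilde{h}}$ and the analogous fact for $T^m_h$.
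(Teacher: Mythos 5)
Your proof is correct and follows essentially the same route as the paper: the paper's own argument unwinds $\tilde{T}^m_{\tilde{f}}\tilde{T}^m_{\tilde{g}}\tilde{s}=\tilde{\Pi}^m(\tilde{f}\,\tilde{\Pi}^m(\tilde{g}\tilde{s}))$ via the defining relation $\tilde{\Pi}^m\epsilon^\ast=\epsilon^\ast\Pi^m$ to get $\epsilon^\ast\Pi^m(f\cdot\Pi^m(g\cdot s))$, swaps $f$ and $g$, and subtracts, which is exactly your double application of the intertwining identity in slightly less modular form. You also correctly read the statement's left-hand side as $[\tilde{T}^m_{\tilde{f}},\tilde{T}^m_{\tilde{g}}]$ (the printed $[\tilde{T}^m_{\tilde{g}},\tilde{T}^m_{\tilde{g}}]$ is a typo), and your remark that $T^m_g s\in\mathcal{H}^m$ is the one point needed to iterate the intertwining relation.
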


\begin{proof}
Recall $ \tilde{\Pi}^m  \epsilon^{\ast}\doteq \epsilon^\ast \Pi^m   $. 
$\tilde{\Pi}^m(\tilde{f}\tilde{\Pi}^m \tilde{g}\tilde{s}) = \tilde{\Pi}^m(\tilde{S_1})$ where $S_1 = f \cdot \Pi^m g \cdot s$ and $   \tilde{S_1} = \tilde{f}\tilde{\Pi}^m \tilde{g}\tilde{s} = \epsilon^* (S_1)$.

Then, since $ \tilde{\Pi}^m  \epsilon^{\ast}\doteq \epsilon^\ast \Pi^m   $,  we have 
$ \tilde{\Pi}^m (\tilde{S_1}(\mu)) = \epsilon^{\ast} \Pi^m ( S_1 )  = \epsilon^{\ast} \Pi^m (   f \cdot \Pi^m g \cdot s     ) $

Interchanging $f$ and $g$ we have 
$ \tilde{\Pi}^m(\tilde{g}\tilde{\Pi}^m \tilde{f}\tilde{s}) = \epsilon^{\ast} \Pi^m (   g \cdot \Pi^m f \cdot s     )$

Thus $\tilde{\Pi}^m(\tilde{f}\tilde{\Pi}^m \tilde{g}\tilde{s}) - \tilde{\Pi}^m(\tilde{g}\tilde{\Pi}^m \tilde{f}\tilde{s} ) = \epsilon^{\ast} \Pi^m (   f \cdot \Pi^m g \cdot s     ) - \epsilon^{\ast} \Pi^m (   g \cdot \Pi^m f \cdot s     ).$ 
\end{proof}

\begin{proposition}\label{toeplitzM}

$\lim_{m \rightarrow \infty} || \tilde{T}^m_ {\tilde{f} }|| = \lim_{m \rightarrow \infty} || T^m_f || = || f||_{\infty}. $

 $\lim_{m \rightarrow \infty} || m [ \tilde{T}^m_{\tilde{f}} , \tilde{T}^m_ {\tilde{g}}]  - i \tilde{T}^m_{\{\tilde{f}, \tilde{g}\}}|| = 0 .$
\end{proposition}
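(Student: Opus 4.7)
The plan is to deduce both statements on $M \setminus M_0$ directly from the Bordemann--Meinrenken--Schlichenmaier estimates on $CP^d$ recorded in (\ref{toeplitzcpn}) by transporting them through the pullback map $\epsilon^\ast$. The central fact I would use is that $\epsilon^\ast\colon \mathcal{H}^m \to \mathcal{H}_X^m$ is a bijective linear isometry: surjectivity is the definition $\mathcal{H}_X^m \doteq \epsilon^\ast(\mathcal{H}^m)$; injectivity follows from the uniqueness argument already given in the section (if $\epsilon^\ast s = 0$ then $s$ vanishes on $\Sigma$, hence on all of $CP^d$ by holomorphic extension); and the isometric property is precisely (\ref{norms}). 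Consequently, any bounded operator on $\mathcal{H}_X^m$ that is conjugate to an operator on $\mathcal{H}^m$ via $\epsilon^\ast$ has the same operator norm.

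For the first conclusion, the equality $||\tilde{T}^m_{\tilde{f}}|| = ||T^m_f||$ is already recorded in the paragraph preceding the statement; it follows by combining the intertwining identity $\tilde{T}^m_{\tilde{g}} \epsilon^\ast = \epsilon^\ast T^m_g$ of the first proposition of this section with the isometry property of $\epsilon^\ast$. Letting $m \to \infty$ and invoking (\ref{toeplitzcpn}) gives $\lim_{m\to\infty} ||\tilde{T}^m_{\tilde{f}}|| = ||f||_\infty$.

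For the second conclusion, I would first establish the operator intertwining
\[
\bigl(m[\tilde{T}^m_{\tilde{f}}, \tilde{T}^m_{\tilde{g}}] - i \tilde{T}^m_{\{\tilde{f}, \tilde{g}\}}\bigr)\circ \epsilon^\ast \,=\, \epsilon^\ast \circ \bigl(m[T^m_f, T^m_g] - i T^m_{\{f,g\}}\bigr).
\]
The commutator piece follows by iterating the intertwining identity: $\tilde{T}^m_{\tilde{f}} \tilde{T}^m_{\tilde{g}} \epsilon^\ast = \tilde{T}^m_{\tilde{f}} \epsilon^\ast T^m_g = \epsilon^\ast T^m_f T^m_g$, and symmetrically for the reversed product. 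The Poisson bracket term uses the definition $\{\tilde{f}, \tilde{g}\} \doteq \epsilon^\ast \{f,g\}$ together with the same intertwining identity applied to the symbol $\{f,g\}$. Taking operator norms of both sides and using that $\epsilon^\ast$ is an isometric isomorphism yields
\[
||\,m[\tilde{T}^m_{\tilde{f}}, \tilde{T}^m_{\tilde{g}}] - i\tilde{T}^m_{\{\tilde{f}, \tilde{g}\}}\,|| \,=\, ||\,m[T^m_f, T^m_g] - i T^m_{\{f,g\}}\,||,
\]
whose right-hand side tends to $0$ as $m \to \infty$ by (\ref{toeplitzcpn}).

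The only real obstacle is the bookkeeping in the commutator intertwining step; once the operators on $M \setminus M_0$ are recognized as unitarily equivalent (via $\epsilon^\ast$) to the ordinary Toeplitz operators on $\mathcal{H}^m$, no further analytic estimate on $M$ is needed and both conclusions become pure transfers of the known Bordemann--Meinrenken--Schlichenmaier theorem on $CP^d$.
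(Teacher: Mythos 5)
Your proposal is correct and follows essentially the same route as the paper: the paper likewise combines the norm identity $\|\epsilon^*(s)\|_X = \|s\|_{CP^d}$ with the two intertwining propositions ($\tilde{T}^m_{\tilde{g}}\epsilon^* = \epsilon^* T^m_g$ and the commutator identity) to transfer the Bordemann--Meinrenken--Schlichenmaier estimates from $CP^d$. Your write-up merely makes explicit the steps the paper leaves as ``rest follows from the previous two propositions,'' in particular the observation that $\epsilon^*$ is an isometric isomorphism of Hilbert spaces.
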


\begin{proof}
As seen before $||\epsilon^*(s)|| = ||\epsilon^*(s)||_X = || s||_{CP^d} =||s||$. This implies $||\tilde{T}_{\tilde{f}} || = || T_{f}||$.

Rest follows from the previous two propositions.  This ends the proof.
\end{proof}

\medskip


If we have two biholomorphisms $\epsilon_1$ and  $\epsilon_2$ from  $M \setminus M_0$  to $U_0 \subset CP^d$  we have an equivalent Toeplitz quantization because we can define an equivalence of the Hilbert spaces  and the Poisson bracket is also preserved. ~\cite{DeGh2}.

\section{\bf A conjecture}

Let $M$ be a compact integral K$\ddot{\rm{a}}$hler manifold with  K$\ddot{\rm{a}}$hler form $\omega$.  Let $L$ be a geometric quantum bundle whose curvature is proportional to $\omega$.  Let us consider  $L^{\otimes m}$ 
whose curvature is proportional to  $m \omega$.

There is a corollary to a theorem by several mathematicians \cite{Ti, Ru, Ze, Lu, Ca}  as mentioned in \cite{Ga}, page 131. It goes as follows.

{\bf Theorem}[Tian,  Ruan, Zelditch,  Lu, Catlin]  \label{TRZLC}
For large $m,$  an orthonormal basis of $H^0(M, L^m)$ gives a map $\epsilon_m: M \mapsto CP^{N_m},$ where $N_m + 1 = \rm{dim} H^0 (M,L^m)$  and

$ \frac{1}{m} \epsilon_m^*( \Omega_{FS})  - 2 \pi \omega  = O(m^{- 2} )$ in $C^{\infty}$.

\medskip

\medskip 

Let $\Omega_m = \epsilon_m^*(\Omega_{FS})$.

For each $m$ and $t,s$ two smooth functions on $M$,  let the two Poisson brackets on $M$ be 

$\{t, s\}_{PB1} = \sum_{ij} \Omega_{m}^{ij} \left(\frac{\partial t}{\partial \bar{\tau}_i}  \frac{\partial s}{\partial \tau_j}  - \frac{\partial s}{\partial \bar{\tau}_i}  \frac{\partial t}{\partial \tau_j} \right)$ and

$\{t, s\}_{PB2} = \sum_{ij} m \omega^{ij} \left(\frac{\partial t}{\partial \bar{\tau}_i}  \frac{\partial s}{\partial \tau_j}  - \frac{\partial s}{\partial \bar{\tau}_i}  \frac{\partial t}{\partial \tau_j} \right)$

Then motivated  by the theorem above we have the following conjecture.

{\bf Conjecture:}
 Suppose $M$ has a Berezin quantization as defined in \cite{Be} induced by   the K$\ddot{\rm{a}}$hler  form $\omega$.
Also in \cite{DeGh} we defined a pullback Berezin-type  quantization on $M$,  in this case pull back from $CP^{N_m}$.  (One can show that one doesnot need  the totally real condition in this case). We conjecture that these two quantizations are equivalent, in the sense that there is in isomorphism of the Hilbert spaces with reproducing kernels and the Possion brackets (which appear in the correspondence principle) are the  same  in the limit $m \rightarrow \infty$.

\section{Acknowledgement}
The authors would like to thank Professors Purvi Gupta (I.I.Sc),  Pranav Pandit (I.C.T.S.-T.I.F.R.),  Indranil Biswas (T.I.F.R. ),  E. K. Narayanan (I.I.Sc), Rajesh Gopakumar (I.C.T.S.-T.I.F.R.), Mahan Maharaj  (T.I.F.R.) and Shiraz Minwalla (T.I.F.R.) for interesting discussions. 
Rukmini Dey acknowledges support from the project RTI4001, Department of Atomic Energy, Government of India and support from grant
CRG/2018/002835, Science and Engineering Research Board, Government of India.
This work is part of Kohinoor Ghosh's thesis.

\end{document}